\DeclareMathOperator{\wt}{wt}
\DeclareMathOperator{\lcm}{lcm}
\DeclareMathOperator{\rank}{rank}
\DeclareMathOperator{\Span}{span}
\DeclareMathOperator{\aut}{Aut}
\DeclareMathOperator{\ev}{ev}
\newtheorem{prop}{Proposition}
\newtheorem{defn}{Definition}
\newtheorem{cor}{Corollary}
\newtheorem{lm}{Lemma}
\newtheorem{ex}{Example}
\begin{document}
	
\title{On the Efficiency of Polar-like Decoding for Symmetric Codes}

\author{Kirill~Ivanov,~\IEEEmembership{Student Member,~IEEE,}
	R\"udiger~L.~Urbanke,~\IEEEmembership{Senior Member,~IEEE,}
	\thanks{K. Ivanov and R. L. Urbanke are with the School of Computer and Communication
		Sciences, EPFL, Switzerland e-mail: (kirill.ivanov@epfl.ch, rudiger.urbanke@epfl.ch).}% <-this % stops a space
	\thanks{Part of the results of this paper is published at the ISITA 2020 \cite{ivanov2020partsym}}}% <-this % stops a space

\markboth{IEEE Transactions on Communications}%
{Submitted paper}

\maketitle

\begin{abstract}
	The recently introduced polar codes constitute a breakthrough in coding theory due to their capacity-achieving property. This goes hand in hand with a quasilinear construction, encoding, and successive cancellation list decoding procedures based on the Plotkin construction. The decoding algorithm can be applied with slight modifications to Reed-Muller or eBCH codes, that both achieve the capacity of erasure channels, although the list size needed for good performance grows too fast to make the decoding practical even for moderate block lengths. 	
	
	The key ingredient for proving the capacity-achieving property of Reed-Muller and eBCH codes is their group of symmetries. It can be plugged into the concept of Plotkin decomposition to design various permutation decoding algorithms. Although such techniques allow to outperform the straightforward polar-like decoding, the complexity stays impractical.

	In this paper, we show that although invariance under a large automorphism group is valuable in a theoretical sense, it also ensures that the list size needed for good performance grows exponentially. We further establish the bounds that arise if we sacrifice some of the symmetries. Although the theoretical analysis of the list decoding algorithm remains an open problem, our result provides an insight into the factors that impact the decoding complexity.
\end{abstract}

\begin{IEEEkeywords}
	Reed-Muller codes, Polar codes, List decoding, Permutation decoding.
\end{IEEEkeywords}

\IEEEpeerreviewmaketitle

\section{Introduction}
\IEEEPARstart{R}{eed-Muller} (RM) codes and extended BCH codes are very well known in classical coding theory and have been studied for decades\cite{sloane1978ecc}. It was recently proved that these codes achieve the capacity of the binary erasure channel under ML decoding \cite{Kudekar2016RM}. The equivalent question for general channels is still open and so is low-complexity (near) maximum likelihood (ML) decoding.

The remarkable structural property of Reed-Muller codes is that they can be represented as Plotkin concatenation of two smaller Reed-Muller codes. Dumer and Shabunov \cite{dumer2006rec} proposed to perform this decomposition recursively and keep the list of most likely codewords at each stage, which brings ML performance for any rate, but the list size grows exponentially with the code length. Another idea that first appeared in \cite{dumer2006rec} and further studied in \cite{kamenev2019rm,ivanov2019permrm,geiselhart2020autrm} is to use the large automorphism group of RM codes and consider multiple codeword permutations, which improves over the list algorithm. In case of eBCH codes, the list size for near-ML decoding is significantly larger \cite{trifonov2018score}.

Polar codes \cite{Arikan2009polar} achieve the capacity of an arbitrary binary memoryless symmetric (BMS) channel. Contrary to RM codes, successive cancellation list (SCL) decoding \cite{Tal2015list} (which works similarly to Dumer-Shabunov decoder) with small list size is sufficient for near-ML performance, but the hardware implementation is challenging due to the path sorting operation \cite{giard2016hardware}. Permutation decoding works without the sorting operation and therefore provides an alternative. However, it requires a certain symmetry in the code, which is lacking for the standard constructions of polar codes as well as polar-like codes such as CRC-aided polar codes \cite{Tal2015list} and polar subcodes \cite{trifonov2016subcodes}. In \cite{kamenev2019polar} and \cite{geiselhart2021autpolar}, the authors construct polar codes, which in some cases perform better under permutation decoding than polar codes with list decoding. The construction of polar subcodes taylored for permutation decoding remains an open problem.

In this paper, we investigate the achievable properties of codes with certain symmetries. We establish the connection with the required list size for near-ML decoding and demonstrate that in case of RM and eBCH codes the automorphism group implies the exponential growth. We also investigate what happens if fewer symmetries are present and discover that a similar result holds even in this relaxed setting. A method to construct `optimal' partially symmetric monomial codes, which might be of some use for short lengths, is also presented. \footnote{The source code to reproduce the bounds and code constructions is available at https://github.com/kir1994/PartSymCodes}
\section{Background}
%\subsection{Notations}
%Let us introduce a few notations first. We denote by $[n]$ the set $\{0,\dots, n-1\}$ and by $u_a^b$ the subvector $(u_a,\dots,u_b)$. $A_{i,*}$ denotes $i$-th column of matrix $A$.
We use $[n]$ to denote the set $\{0,\dots,n-1\}$. $\mathbb F_q$ denotes the finite field with $q$ elements, and $\mathbb F_q^m$ is the $m$-dimensional vector space over $\mathbb F_q$. Note that we can consider the vector space $\mathbb F_q^m$ as the finite field $\mathbb F_{q^m}$ and vice versa. $\mathbb F_q^*=\mathbb F_q\setminus\{0\}$ is the multiplicative group of $\mathbb F_q$. Bold letters are used for matrices and vectors, e.g., $\mathbf{A}$ and $\mathbf b$. Given a vector $\mathbf v=(v_0,\dots,v_{m-1}),v_i\in \mathbb F_2$, we consider it as an integer $v=\sum_{i=0}^{m-1}v_i2^i$ where needed. For $\mathbf c=(c_0,\dots,c_{n-1})$, we define $\mathbf c_a^b=(c_a,\dots,c_b), 0\le a\le b<n$. $R(\mathcal C)$ denotes the rate of a binary linear code $\mathcal C$.

\subsection{Boolean functions}
Let $\{ x_0, \ldots, x_{m-1} \}$ be a collection of $m$ variables taking their values in $\mathbb F_2$, let $\mathbf v = (v_0, \ldots, v_{m-1}) \in \mathbb{F}_2^m$ be any binary $m$-tuple, and let $\wt(\cdot)$ denote the Hamming weight. Then,
\[
x^{\mathbf v} = \prod_{i=0}^{m-1} x_i^{v_i}
\]
denotes a monomial of degree $\wt(\mathbf v)$. % Its evaluation vector $\ev(x^v)\in \mathbb F_2^m$ can be obtained by evaluating $x^v$ at all points of $\mathbb{F}_2^m$.It can be noticed that $\wt(\ev(x^v))=2^{m-\wt(v)}$. 

A function $f(\mathbf x)=f(x_0,\dots,x_{m-1}):\ \mathbb F_2^m\to \mathbb{F}_2$ is called Boolean. Any such function can be uniquely represented as an $m$-variate polynomial:
$$
f(x_0,\dots,x_{m-1})=\sum_{\mathbf v\in \mathbb F_2^m}a_{\mathbf v}x^{\mathbf v},
$$
where $a_{\mathbf v}\in \{0,1\}$, which is also called its algebraic normal form (ANF) \cite{carlet2010boolean}. Its evaluation vector $\ev(f(\mathbf x))\in \mathbb F_2^{2^m}$ is obtained by evaluating $f$ at all points $\boldsymbol \alpha_i$ of $\mathbb{F}_2^m$. Note that any length-$2^m$ binary vector $\mathbf c$ can be considered as an evaluation vector of some function $f$. For the rest of the paper, we assume the standard bit ordering of points, i.e., $\boldsymbol\alpha_i$ being the binary expansion of integer $i$.

%Another representation of $f$ uses the isomorphism $\mathbb F_2^m\to \mathbb F_{2^m}$. It allows the following \textit{trace representation}:
%$$	f(x)=\sum_{k\in \Gamma(m)}\Tr_{m_k}(f_kx^k)+f_{2^m-1}x^{2^m-1}=\sum_j \delta_jx^j,$$
%where $\Gamma(m)$ is a set of the representatives of cyclotomic cosets of 2 modulo $2^m-1$, $m_k$ is a size of cyclotomic class with representative $k$ and $\Tr_{m_k}(x)=x+x^2+\dots+x^{2^{m_k}-1}$ is the trace function $\mathbb F_{2^{m_k}}\to \mathbb F_2$.

\subsection{Monomial and polynomial codes}
Consider a binary linear $(n=2^m,k,d)$ code $\mathcal C$ with generator matrix $\mathbf G$. Its \textit{generating set} is given by
$$
M_{\mathcal C}=\{f_i,0\le i < k | \ev(f_i)=\mathbf G_{i,*}\},
$$
where $\mathbf G_{i,*}$ are rows of $G$. The code $\mathcal C$ is called monomial if there exists $M_{\mathcal C}$ that contains only monomials (which can be obtained by applying the Gaussian elimination on the coefficients of $f_i$'s) and polynomial otherwise. The minimum distance of a monomial code can be calculated \cite[Proposition~3]{bardet2016algebraic} as 

\begin{equation}
\label{eq:mondist}
	d_{min}(\mathcal C)=2^{m-\max_{x^{\mathbf v} \in M_{\mathcal C}}\wt (\mathbf v)}.
\end{equation}

Consider the matrix $\mathbf A_m=\left(
\arraycolsep=1.15pt\def\arraystretch{0.5}
\begin{array}{cc}
	1&1\\
	0&1
\end{array} \right)^{\otimes m}$, where $\otimes m$ denotes an $m$-fold Kronecker product of a matrix
with itself. Observe that $\left(1\ 1\right)$ and $\left(0\ 1\right)$ are the evaluations over $\mathbb F_2$ of the constant monomial $1$ and the monomial $x_0$, respectively, and from the induction on $m$ it follows that the $v$-th row of $\mathbf A_m$ is an evaluation vector over $\mathbb{F}_2^m$ of the monomial $x^{\mathbf v}$. Hence, the encoding for monomial codes can be performed as $\mathbf c=\mathbf u\mathbf A_m$, where $u_v=1$ if $x^{\mathbf v}\in M_{\mathcal C}$ and $u_v=0$ otherwise.

\subsubsection{Polar codes}
A $(n=2^m,k)$ polar code \cite{Arikan2009polar} with the set of frozen symbols $\mathcal F$ is a binary linear block code generated by rows with indices $i\in[n]\setminus \mathcal F$ of the matrix $\mathbf A_m$. Our definition of polar codes is slightly different from conventional, which uses the matrix $\left(
\arraycolsep=1.15pt\def\arraystretch{0.5}
\begin{array}{cc}
	1&0\\
	1&1
\end{array} \right)^{\otimes m}$, but as already noted in \cite{bardet2016algebraic}, both definitions are equivalent and ours simplifies the polynomial notation. For given  binary memoryless symmetric (BMS) channel $W$, the set $\mathcal F$ contains the $n-k$ indices with largest bit error probability under successive cancellation decoding. Polar codes are monomial with the generating set $M_{\mathcal F}=\{x^{\mathbf i} | i \notin \mathcal F\}$.

\subsubsection{Reed-Muller codes}
A Reed-Muller code \cite{reed1954rm,muller1954rm} RM$(r,m)$ of order $r$ is spanned by the evaluations of $m$-variate monomials of degree at most $r$. The RM$(r,m)$ code has length $2^m$, dimension $\sum_{i=0}^{r}\binom{m}{i}$ and minimum distance $2^{m-r}$. RM codes are monomial with the generating set $M_{r,m}=\{x^{\mathbf v} | \wt(\mathbf v)\le r\}$.

\subsubsection{Extended BCH codes}
A $(2^m-1,k, d\ge \delta)$ primitive narrow-sense BCH code \cite{sloane1978ecc} with design distance $\delta$ has a parity check matrix $H$ with elements 
$$H_{j,i}=\alpha_i^j, 0\le i < 2^m, 0\le j < \delta-1,$$
where $\alpha_i$ are distinct elements of $\mathbb F_{2^m}^*$. An extended code is formed by adding an overall parity check symbol.

\subsection{Derivatives}
The derivative in direction $\mathbf b$ of the Boolean function $f$ is defined as
$$
(D_{\mathbf b}f)(\mathbf x) = f(\mathbf x+\mathbf b)-f(\mathbf x).
$$
In case of monomials, this expression can be written as
$$
D_{\mathbf b}x^{\mathbf v} = (\mathbf x+\mathbf b)^{\mathbf v}-x^{\mathbf v}=\prod_{i=0}^{m-1}(x_i+b_i)^{v_i}-\prod_{i=0}^{m-1}x_i^{v_i}.
$$
In case of $\wt(\mathbf b)=1$, i.e., when $\mathbf b=\mathbf e_i$, the directional derivative coincides with the partial derivative $\frac{\partial f}{\partial x_i}$.

The derivative in direction $\mathbf b$ of the code $\mathcal C$ is a binary linear code with generating set
\begin{equation}
	\label{eq:proj}
	M_{\mathcal C\to \mathbf b}=\left\{D_{\mathbf b}f_i |f_i \in M_{\mathcal C}\right\}.
\end{equation}
By definition, $D_{\mathbf b}f_i$ has identical values at coordinates $\mathbf x$ and $\mathbf x+\mathbf b$ for all $\mathbf x\in \mathbb{F}_2^m$, so we can discard the coordinates $\mathbf x+\mathbf b$ and obtain the $(n^{(\mathbf b)}=2^{m-1},k^{(\mathbf b)}=\dim M_{\mathcal C\to \mathbf b},d^{(\mathbf b)})$ code $\mathcal C^{(\mathbf b)}$.

The partial derivatives of Boolean functions are inherently connected with the Plotkin construction. Namely, any function $f$ can be decomposed as
\begin{align*}
	%\label{eq:plotkindec}
	f(x_0,\dots,x_{m-1})&=g(x_1,\dots,x_{m-1})\\&+x_0h(x_1,\dots,x_{m-1}),
\end{align*}
where $g(x_1,\dots,x_{m-1})$ takes identical values for $x_0=0$ and $x_0=1$, whereas $x_0h(x_1,\dots,x_{m-1})$ is only nonzero when $x_0=1$.

Consider a permutation $\pi$ on $[2^m]$ and define its action on the Boolean function $f$ as another Boolean function $g$ obtained by permuting its evaluation vector, namely $\pi(f)=g:\ \ev(g)=(f(\pi(\boldsymbol\alpha_0)),\dots,f(\pi(\boldsymbol\alpha_{2^m-1})))$. The set of permutations that leave the code $\mathcal C$ invariant, i.e., map its codewords to other codewords, forms the automorphism group of a code denoted by $\aut(\mathcal C)$. Note that if $\aut(\mathcal C)$ contains the permutation $\mathbf x\to \mathbf x+\mathbf b$, then $D_{\mathbf b}\mathcal C$ is a subcode of $\mathcal C$ (assuming we don't discard the coordinates $\mathbf x+\mathbf b$).

\subsection{List and permutation decoding}
Assume that the codeword $\mathbf c$ is transmitted through a BMS channel $W$ and the received vector is $\mathbf y$. The successive cancellation (SC) algorithm performs bit-by-bit estimation of the vector $\mathbf u$ as
\begin{equation}
	\label{mSCProb}
	\tilde u_i=\begin{cases}\arg\max_{u_i\in \{0,1\}} W^{(i)}(\mathbf y_0^{n-1},\mathbf {\tilde u}_0^{i-1}|u_i), &i\notin\mathcal F,\\
		0&i\in \mathcal F,
	\end{cases}
\end{equation}
where $W^{(i)}=W^{(\{-,+\}^m)}$ and is obtained by recursive application of channel transformations 
$$W^{(-)}(y_0,y_1|u_0)=\frac{1}{2}\sum_{u_1\in \{0,1\}}W(y_0|u_0\oplus u_1)W(y_1|u_1)$$
and
$$W^{(+)}(y_0,y_1,u_0|u_1)=\frac{1}{2}W(y_0|u_0\oplus u_1)W(y_1|u_1).$$ 
We can reformulate the recursions in the decoding process as follows: 
\begin{enumerate}
	\item Recover $\mathbf c^{(-)}=\mathbf c_0^{n/2-1}\oplus \mathbf c_{n/2}^{n-1}$ transmitted through 'XOR' channel $W^{(-)}$.
	\item Recover $\mathbf c^{(+)}=\mathbf c_{n/2}^{n-1}=\mathbf{c^{(-)}}\oplus \mathbf c_0^{n/2-1}$ transmitted through 'Combine' channel $W^{(+)}$, assuming that the recovered codeword $\mathbf{\hat{c}^{(-)}}$ is correct.
	\item Return $\mathbf c=(\mathbf c^{(-)}\oplus \mathbf c^{(+)}|\mathbf c^{(+)})$
\end{enumerate}
One way to improve the finite-length performance of SC decoding is to keep track of both possible values $u_i$ so that at every step of the recursion there are at most $L$ possible vectors $\mathbf c^{(-)}$ or $\mathbf c^{(+)}$. Another approach is to observe that the code $\mathcal C^{(-)}=\{\mathbf c_0^{n/2-1}\oplus \mathbf c_{n/2}^{n-1}| \mathbf c\in \mathcal C\}$ which appears at step 1 is a partial derivative and the recursions in SC decoding follow a certain predefined ordering in which the derivatives are taken. Hence, one can take $P$ distinct orderings, perform SC decoding for each of them in parallel and return the closest codeword to the received vector.%, which is simpler for hardware implementation and has smaller latency compared to the list decoding.

%The decoding process can be represented with an $m$-layer factor graph, where each layer corresponds to the partial derivative and the probabilities propagate from left to right. The ordering on the derivatives corresponds to a layer permutation in this graph. Figures \ref{fig:factgraph1} and \ref{fig:factgraph2} demonstrate the factor graphs for $m=3$ and two derivative orderings, where $\pi=(2,1,0)$ means the derivatives $\partial /\partial x_2,\partial /\partial x_1$ and $\partial /\partial x_0$ are taken at the recursion depth $0,1$ and $2$, respectively. %Note that $\pi$ can also be considered as a permutation of vector $u$: $u_i\to u_{\pi(i)}$, where $\pi(i)$ permutes bits in the binary representation of integer $i$.
%\begin{figure}
%	\begin{subfigure}{0.45\linewidth}
%		\centering
%		\scalebox{1.5}{\input{factgraph0.tex}}
%		\caption{Factor graph for $m=3$, $\pi=(2,1,0)$}
%		\label{fig:factgraph1}
%	\end{subfigure}
%	\begin{subfigure}{0.45\linewidth}
%	\centering
%	\scalebox{1.5}{\input{factgraph1.tex}}
%	\caption{Factor graph for $m=3$, $\pi=(0,2,1)$}
%	\label{fig:factgraph2}
%	\end{subfigure}	
%\end{figure}

\subsection{Decoding efficiency}
%In what follows we demonstrate that if all derivative codes of $\mathcal C$ have large rates, SC-based decoding might require an exponential complexity to achieve the ML performance.
\begin{prop}
\label{prop:expl}
	Consider a code $\mathcal C$ of length $n=2^m$ and the transmission through a BMS channel $W$. If for all $i\in [m]$ holds $R(\mathcal C^{(\mathbf e_i)})>I(W^{(-)})$, SC list algorithm needs the list size $L=2^{\Omega(n)}$ to achieve ML performance.
\end{prop}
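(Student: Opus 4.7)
My plan is to reduce the claim to the strong converse of Shannon's channel coding theorem applied at the first stage of the SC recursion. Fix any variable ordering used by the SC list decoder; this determines a direction $\mathbf e_i$ for the first Plotkin split. By the reformulation of SC decoding recalled above, the first $n/2$ steps of SC list decoding on $\mathcal C$ are precisely list decoding of the derivative code $\mathcal C^{(\mathbf e_i)}$ of length $n/2$ over the synthesized channel $W^{(-)}$: the size-$L$ list of prefixes $\tilde{\mathbf u}_0^{n/2-1}$ is in bijection with a size-$L$ list of codeword candidates for $\mathcal C^{(\mathbf e_i)}$. If the true $\mathbf c^{(-)}$ is pruned at this point, then the transmitted codeword is absent from the final list and SC list cannot match ML whenever ML performance is non-trivial.

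The needed tool is the standard list-decoding strong converse, provable by a Fano-plus-data-processing argument: for a linear code $\mathcal D$ of length $n'$ and rate $R > I(W)$, transmitted under the uniform codeword distribution through a BMS channel $W$, any decoder returning a list $\hat X$ of size $L$ satisfies
\begin{equation*}
\Pr[X \in \hat X]\;\le\;\frac{n' I(W) + 1 + \log L}{n' R}.
\end{equation*}
Applied with $\mathcal D = \mathcal C^{(\mathbf e_i)}$, $n' = n/2$ and $W = W^{(-)}$, and combined with the hypothesis $R(\mathcal C^{(\mathbf e_i)}) > I(W^{(-)})$, any $L = 2^{o(n)}$ pushes the survival probability of $\mathbf c^{(-)}$ strictly away from one. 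Inverting the inequality under the demand that this survival probability remain close to one gives $\log L \ge (n/2)\bigl(R(\mathcal C^{(\mathbf e_i)}) - I(W^{(-)})\bigr) - o(n)$, i.e., $L = 2^{\Omega(n)}$. Because the hypothesis holds for every $i \in [m]$, the bound is independent of which variable ordering is chosen.

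Two technical points must be handled. First, the uniform distribution on codewords of $\mathcal C$ pushes forward to the uniform distribution on $\mathcal C^{(\mathbf e_i)}$, which is routine because the derivative is a surjective $\mathbb F_2$-linear map. Second, one must translate ``achieving ML performance'' into a concrete requirement on the partial list: in the non-trivial regime $R(\mathcal C)\le I(W)$, ML returns the transmitted codeword with probability tending to one, so SC-list matching ML forces the transmitted codeword to lie in the final list with probability tending to one, and hence $\mathbf c^{(-)}$ in the partial list at step $n/2$; in the complementary regime the proposition is vacuous. I expect this second point---rigorously connecting the ML error event on $\mathcal C$ to the list-exclusion event on $\mathcal C^{(\mathbf e_i)}$---to be the only place where care is genuinely needed, while the rest is a direct invocation of the converse.
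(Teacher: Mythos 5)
Your proposal is correct and follows essentially the same route as the paper: view the first stage of the SC recursion as list decoding of the derivative code $\mathcal C^{(\mathbf e_i)}$ over the synthesized channel $W^{(-)}$ above its capacity, note that the correct $\mathbf c^{(-)}$ must survive in the size-$L$ list for the decoder to succeed, and invoke a converse to conclude $L=2^{\Omega(n)}$. The only differences are cosmetic: the paper cites the Shannon--Gallager--Berlekamp lower bound on list decoding above capacity where you derive a Fano-type weak converse, and you spell out the technical points (the bijection between prefix lists and derivative-code candidates, the uniform pushforward, and the reduction of ``achieving ML performance'' to the list-exclusion event) that the paper's proof leaves implicit.
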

\begin{proof}
SC decoding can be considered as a two-stage process, when we first recover $\mathbf c^{(-)}\in\mathcal C^{(-)}$, assuming the transmission through the synthetic channel $W^{(-)}$, and then use it to recover $\mathbf c^{(+)}\in\mathcal C^{(+)}$, assuming the transmission through the synthetic channel $W^{(+)}$. List decoding passes $L$ candidate codewords $\mathbf c^{(-)}$ to the next stage and the decoder can only succeed if the correct codeword is in the list. If $R(\mathcal C^{(-)})>I(W^{(-)})$, we are trying to decode above the capacity and therefore need the list to be at least $2^{n(R(\mathcal C^{(-)})-I(W^{(-)}))}$ to succeed \cite[eq.~(1.6)]{shannon67lb}. It remains to recall that the code $\mathcal C^{(-)}$ is a partial derivative $\mathcal C^{(\mathbf e_i)}$, where $i$ depends on the chosen ordering.
\end{proof}
\textbf{Remark.} The condition $R(\mathcal C^{(\mathbf e_i)})>I(W^{(-)})$ in proposition \ref{prop:expl} assumes that code $\mathcal C^{(\mathbf e_i)}$ is capacity-achieving and can be safely replaced with $R(\mathcal C^{(\mathbf e_i)})>\hat I$, where $\hat I$ is the largest capacity of the channel from the same family as $W^{(-)}$ so that the ML decoding of $\mathcal C^{(\mathbf e_i)}$ almost always succeeds.

Therefore, in order to have good performance with small list size, one needs to guarantee that there is at least one derivative with sufficiently small rate. Permutation decoding is more demanding in that regard since it uses different orderings on the derivatives and needs more than one with small rate. In what follows, we demonstrate the lower bounds on this rate, and that the automorphism groups of Reed-Muller and eBCH codes ensure the exponential complexity of their list and permutation decoding.
\section{Fully symmetric codes}
\begin{defn}
	A $(2^m,k)$ code $\mathcal C$ is fully symmetric if all its partial derivatives have equal dimensions. 
\end{defn}
Let us denote this dimension as $\tilde k$. In this section, we demonstrate that for fully symmetric codes $\tilde k$ is bounded from below and in fact becomes of order $k/2$. We also show that Reed-Muller and eBCH codes are fully symmetric.

\begin{prop}
	\label{prop:flb}
	Consider a $(2^m,k)$ fully symmetric code $\mathcal C$. If its dimension can be expressed as $k=\sum_{i=0}^{l-1}\binom{m}{i}+j\frac{\lcm(l,m)}{l}$, where $0\le j\frac{\lcm(l,m)}{l} < \binom{m}{l}$, then 
	\begin{equation}
		\label{eq:flb}
		\tilde k\ge\sum_{i=0}^{l-2}\binom{m-1}{i}+j\frac{\lcm(l,m)}{m}
	\end{equation} 
\end{prop}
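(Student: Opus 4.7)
The plan is to establish the bound through a leading-monomial argument that reduces the problem to a combinatorial minimization over $k$-subsets of monomials, so that full symmetry enters at the end as a simple divisibility constraint.

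First, I fix any monomial order on $\mathbb F_2[x_0,\dots,x_{m-1}]$ (for example graded lex) and apply Gaussian elimination on the ANF coefficients of an arbitrary basis of $\mathcal C$ to obtain a basis $f_1,\dots,f_k$ whose leading monomials $x^{\mathbf v_1},\dots,x^{\mathbf v_k}$ are pairwise distinct. Let $M=\{\mathbf v_1,\dots,\mathbf v_k\}$ denote this set of leading exponent vectors. The central local step is the inequality
$$
\dim\mathcal C^{(\mathbf e_i)}\ge|\{s:(v_s)_i=1\}|\quad\text{for each }i\in[m].
$$
I would prove it by observing that $\dim\mathcal C^{(\mathbf e_i)}=k-\dim\mathcal C_{\hat i}$, where $\mathcal C_{\hat i}=\ker(D_{\mathbf e_i}|_{\mathcal C})$ is the subcode of codewords whose ANF has no monomial containing $x_i$. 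Every nonzero $f\in\mathcal C_{\hat i}$ lies in $\mathcal C$, so its leading monomial equals some $x^{\mathbf v_s}\in M$; since in addition no monomial of $f$ involves $x_i$, that leading monomial must satisfy $(v_s)_i=0$. Gaussian elimination within $\mathcal C_{\hat i}$ then produces $\dim\mathcal C_{\hat i}$ distinct leading monomials drawn from this restricted subset of $M$, giving $\dim\mathcal C_{\hat i}\le|\{s:(v_s)_i=0\}|$ and hence the claim.

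Summing the local bound over $i$ and invoking full symmetry yields
$$
m\tilde k=\sum_{i=0}^{m-1}\dim\mathcal C^{(\mathbf e_i)}\ge\sum_{s=1}^k\wt(\mathbf v_s).
$$
Since $M$ is an arbitrary $k$-element set of distinct exponent vectors, the right-hand side is minimized by choosing the $k$ vectors of smallest Hamming weight: all $\sum_{i=0}^{l-1}\binom{m}{i}$ vectors of weight below $l$ together with any $j\lcm(l,m)/l$ vectors of weight exactly $l$. Applying the identity $i\binom{m}{i}=m\binom{m-1}{i-1}$, this minimum equals $m\sum_{i=0}^{l-2}\binom{m-1}{i}+j\lcm(l,m)$; dividing by $m$ then recovers \eqref{eq:flb}.

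The main technical point is the local estimate, where the leading-monomial viewpoint makes the argument transparent without requiring $\mathcal C$ itself to be monomial. The divisibility hypothesis $k-\sum_{i=0}^{l-1}\binom{m}{i}=j\lcm(l,m)/l$ is precisely the combinatorial obstruction to distributing weight-$l$ monomials evenly across the $m$ coordinates while preserving full symmetry, and is also exactly what makes the claimed fraction $jl/\gcd(l,m)$ an integer.
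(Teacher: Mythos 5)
Your proof is correct and follows essentially the same route as the paper: the local estimate $\dim\mathcal C^{(\mathbf e_i)}\ge|\{s:(v_s)_i=1\}|$ is the paper's pivot-column bound $\rank\mathbf{\tilde M}_i\ge|\{j|\phi_j=1\wedge v^{(j)}_i=1\}|$ in leading-monomial language, and the averaging over $i$ followed by minimizing $\sum_s\wt(\mathbf v_s)$ over $k$-subsets is the paper's ``remove the largest-degree monomials'' argument. Your version is somewhat cleaner in that the rank--nullity identity $\dim\mathcal C^{(\mathbf e_i)}=k-\dim\ker(D_{\mathbf e_i}|_{\mathcal C})$ handles monomial and polynomial codes uniformly, whereas the paper treats them in separate subsections and leaves the general polynomial case as ``straightforward.''
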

\noindent We get $\lcm$ in the expression due to the full symmetry constraint, which is further explained below. We call the code \textit{optimal} if it satisfies \eqref{eq:flb} with an equality.

\begin{prop}
\label{prop:fconv}
Consider a sequence of optimal fully symmetric codes $\mathcal C$ of fixed rate and increasing length $2^m$. Then for $i\in [m]$
\begin{equation}
	\label{eq:fconv}
	\lim_{m\to \infty} R(\mathcal C^{(\mathbf e_i)})=R(\mathcal C).
\end{equation} 
\end{prop}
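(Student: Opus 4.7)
The plan is to reduce the claim to an asymptotic comparison of binomial coefficients. Since $\mathcal C^{(\mathbf e_i)}$ has length $2^{m-1}$ and dimension $\tilde k$, I have $R(\mathcal C^{(\mathbf e_i)})-R(\mathcal C)=(2\tilde k-k)/2^m$, so it suffices to show that $2\tilde k-k=o(2^m)$ as $m\to\infty$ while the rate $R(\mathcal C)=k/2^m$ stays fixed.

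I would then plug the equality case of Proposition~\ref{prop:flb} into both $\tilde k$ and $k$ and collapse the binomial parts via Pascal's identity $\binom{m}{i}=\binom{m-1}{i}+\binom{m-1}{i-1}$. A direct computation gives $2\sum_{i=0}^{l-2}\binom{m-1}{i}-\sum_{i=0}^{l-1}\binom{m}{i}=-\binom{m-1}{l-1}$, so after the $j$-contributions are combined the equality reduces to $2\tilde k-k=-\binom{m-1}{l-1}+j\lcm(l,m)(2l-m)/(ml)$. The remaining task is to show that both summands are $o(2^m)$.

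For the first summand, the hypothesis that $R\in(0,1)$ is fixed, together with $j\lcm(l,m)/l<\binom{m}{l}$, forces $\sum_{i=0}^{l-1}\binom{m}{i}/2^m$ to stay bounded away from $0$ and $1$. Standard Chernoff/CLT concentration of the symmetric binomial distribution then yields $l=m/2+O(\sqrt{m})$, so $\binom{m-1}{l-1}/2^m\le\binom{m-1}{\lfloor(m-1)/2\rfloor}/2^m=O(1/\sqrt{m})\to 0$. For the second summand, I bound $j\lcm(l,m)/l$ by $\binom{m}{l}$ using the same constraint and obtain the estimate $\binom{m}{l}/2^m\cdot|2l-m|/m$, which is a product of two $O(1/\sqrt{m})$ factors and therefore also vanishes.

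The main obstacle I expect is the concentration step: justifying the bound $l=m/2+O(\sqrt{m})$ uniformly across the sequence of codes so that a single $o(1)$ estimate applies to every member of the family. This ultimately reduces to classical tail bounds for $\operatorname{Bin}(m,1/2)$; once it is in place, the rest of the argument is a bookkeeping exercise in Pascal's identity and textbook estimates for central binomial coefficients, and no further properties of the codes themselves are required.
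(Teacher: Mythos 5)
Your proof is correct, but it takes a genuinely different route from the paper's. The paper verifies the claim only for rate $1/2$ and odd $m$, where $j=0$ and the difference $\tilde k/2^{m-1}-1/2$ collapses by symmetry of Pascal's triangle to $-\binom{m-1}{\lfloor m/2\rfloor}/2^m$, and then extends to arbitrary fixed rates by appealing to convexity of the bound \eqref{eq:flb} in $k$ (together with a remark, via Legeay's result, that $\tilde k\le k/2$ when the translation is an automorphism). You instead compute $2\tilde k-k$ for general $l$ and $j$: the identity $2\sum_{i=0}^{l-2}\binom{m-1}{i}-\sum_{i=0}^{l-1}\binom{m}{i}=-\binom{m-1}{l-1}$ is right, as is the $j$-term $j\lcm(l,m)(2l-m)/(ml)$, and your concentration step is sound, since $\sum_{i=0}^{l-1}\binom{m}{i}\le k<\sum_{i=0}^{l}\binom{m}{i}$ with $k/2^m=R$ fixed in $(0,1)$ forces $l=m/2+O_R(\sqrt m)$ by standard tail bounds for $\mathrm{Bin}(m,1/2)$, after which the two error terms are $O(1/\sqrt m)$ and $O(1/m)$. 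What your approach buys is a uniform, fully explicit treatment of every rate at once, replacing the paper's rather terse convexity step; what it costs is the extra (but classical) concentration lemma. Two minor points: the central-binomial bound $\binom{m-1}{l-1}\le\binom{m-1}{\lfloor(m-1)/2\rfloor}$ already handles the first term without any concentration, and since the statement says only ``fixed rate,'' you should either note that the degenerate rates $0$ and $1$ are trivial or state the assumption $R\in(0,1)$ explicitly.
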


Proposition \ref{prop:flb} implies that list or permutation decoding in any channel $W$ s.t. $\frac{\tilde k}{2^{m-1}}>I(W^{(-)})$ needs an exponential complexity to achieve the ML performance, and proposition \ref{prop:fconv} states that this condition asymptotically becomes $R(\mathcal C)>I(W^{(-)})$.

\subsection{Proof of Proposition \ref{prop:flb} for monomial codes}
\label{ss:flb}
Let us start from the rate-1 code $\mathcal C_m$. All its derivatives are also rate-1 codes and therefore $\mathcal C_m$ is fully symmetric. Any monomial code $\mathcal C$ can be constructed by removing $2^m-k$ monomials from $M_{\mathcal C_m}$ and we would like to do it in a way such that $\mathcal C$ is fully symmetric and $\tilde k$ is minimized. Observe that $\frac{\partial x^{\mathbf v}}{\partial x_i}$ is nonzero iff $v_i=1$ and hence removing $x^{\mathbf v}$ from the generating set decreases the dimensions of $\wt(\mathbf v)=\deg(x^{\mathbf v})$ partial derivatives by 1 (or equivalently, the dimensions of all derivatives are decreased on average by $\wt(\mathbf v)/m$). This implies that the optimum strategy is to remove $2^m-k$ monomials of the largest degrees.

If $k=\sum_{i=0}^{l}\binom{m}{i},0\le l \le m$, we simply remove all monomials of degree larger than $l$ and consequently each derivative contains all monomials on $m-1$ variables of degree at most $l-1$, which gives $\tilde k=\sum_{i=0}^{l-1}\binom{m-1}{i}$. Otherwise, we can write $k=\sum_{i=0}^{l-1}\binom{m}{i}+p,0<p<\binom{m}{l},$ and remove $p$ monomials of degree $l$ in addition. It follows that the dimensions of all derivatives are decreased on average by $\frac{pl}{m}$, and since the code is fully symmetric the actual decrease for any derivative should also be $\frac{pl}{m}$. Therefore, $\frac{pl}{m}$ must be an integer, which is true only if $p$ is a multiple of $\frac{\lcm(l,m)}{l}$, which implies the bound \eqref{eq:flb}.

The bound coincides with the parameters of Reed-Muller codes when $j=0$. Otherwise, the set of monomials to remove can be found by considering a bipartite graph $\mathcal G=(V_L, V_R,E)$ with left vertices $h_{i}\in V_L$ isomorphic to variables $x_i,1\le i \le m$ and right vertices $h_{v}\in V_R$ isomorphic to all degree-$l$ monomials $x^{\mathbf v}$. We draw an edge between $h_{i}$ and $h_{v}$ if the monomial $x^{\mathbf v}$ contains $x_i$. This graph is $(\binom{m-1}{l-1},l)$-biregular and we want to remove all but $j\frac{\lcm(l,m)}{l}$ of its right vertices so that the graph remains biregular, i.e., find its $(\cdot, l)$-biregular subgraph $\mathcal G'$. 

\begin{figure}
	\centering
	\scalebox{1}{\begin{tikzpicture}[thick,scale=0.55, every node/.style={scale=0.6}]
\tikzstyle{lnode}  = [radius=0.5, black, thick, fill=white];
\tikzstyle{rnodeS}  = [radius=0.5, red, thick, fill=white];
\tikzstyle{lineG}  = [-, thick,  red];

\coordinate (0) at (1,1);
\coordinate (1) at (1,3);
\coordinate (2) at (1,5);
\coordinate (3) at (1,7);
\coordinate (4) at (1,3);
\coordinate (20) at (1.5,1);
\coordinate (21) at (1.5,3);
\coordinate (22) at (1.5,5);
\coordinate (23) at (1.5,7);
\coordinate (24) at (1.5,3);

\coordinate (5) at (6,-1);
\coordinate (6) at (6,1);
\coordinate (7) at (6,3);
\coordinate (8) at (6,5);
\coordinate (9) at (6,7);
\coordinate (10) at (6,9);
\coordinate (35) at (5.5,-1);
\coordinate (36) at (5.5,1);
\coordinate (37) at (5.5,3);
\coordinate (38) at (5.5,5);
\coordinate (39) at (5.5,7);
\coordinate (310) at (5.5,9);
\draw (0) [rnodeS] circle;
\draw (0) node {$x_4$};
\draw (1) [rnodeS] circle;
\draw (1) node {$x_3$};
\draw (2) [rnodeS] circle;
\draw (2) node {$x_2$};
\draw (3) [rnodeS] circle;
\draw (3) node {$x_1$};

\draw (5) [rnodeS] circle (0.6cm);
\draw (5) node {$x_3x_4$};
\draw (6) circle (0.6cm);
\draw (6) node {$x_2x_4$};
\draw (7) circle (0.6cm);
\draw (7) node {$x_2x_3$};
\draw (8) circle (0.6cm);
\draw (8) node {$x_1x_4$};
\draw (9) circle (0.6cm);
\draw (9) node {$x_1x_3$};
\draw (10) [rnodeS] circle (0.6cm);
\draw (10) node {$x_1x_2$};

\draw  [lineG] (20) -- (35);
\draw (20) -- (36);
\draw (20) -- (38);
\draw  [lineG] (21) -- (35);
\draw (21) -- (37);
\draw (21) -- (39);
\draw (22) -- (36);
\draw (22) -- (37);
\draw  [lineG] (22) -- (310);
\draw (23) -- (38);
\draw (23) -- (39);
\draw  [lineG] (23) -- (310);

\end{tikzpicture}}
	\caption{$(3,2)$-regular $\mathcal G$ and $(1,2)$-regular $\mathcal G'$}
	\label{fig:subgraph}
\end{figure}
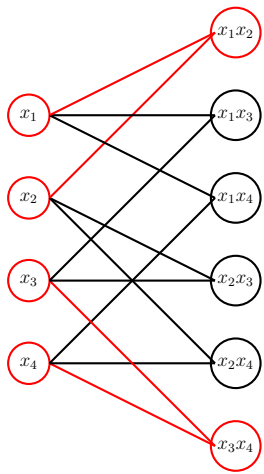

Figure \ref{fig:subgraph} demonstrates an example of graph $\mathcal G$ for $m=4,l=2$ and one of its possible $(1,2)$-regular subgraphs $\mathcal G'$ (in red). Such $\mathcal G'$ can be found as a maximum flow solution for the network with the source connected to all left vertices with capacity-$\frac{\lcm(l,m)}{l}$ edges, the sink connected to all right vertices with capacity-$m$ edges and all $e\in E$ having the unit capacity.

\subsection{Proof of Proposition \ref{prop:flb} for polynomial codes}
\label{ssFullPoly}
We start from a simple case and consider an $(2^m,k)$ code $\mathcal C_r\subseteq \Span \{x^{\mathbf v}|\wt(\mathbf v)=r\}$ for some fixed $r$. By definition, $0\le k\le \binom{m}{r}$, and consequently $0\le \tilde k\le \binom{m-1}{r-1}$. Code $\mathcal C_r$ by definition is spanned by $k$ degree-$r$ linearly independent homogeneous polynomials $f_s, 1\le s\le k$. Assume now a certain ordering on monomials $x^{\mathbf v^{(j)}}$, e.g., lexicographic w.r.t. $\mathbf v^{(j)}$, and consider the $k\times \binom{m}{r}$ matrix $\mathbf M$ such that $M_{s,j}=1$ if $f_s$ includes $\mathbf v^{(j)}$. $\mathbf M$ is a basis of the linear space of all polynomials whose evaluations are codewords of $\mathcal C_r$ and therefore has full row rank, so we can use Gaussian elimination to transform it into $\mathbf {\tilde M}=\left(\mathbf I \ *\right)\mathbf P$, where $\mathbf P$ is a column permutation matrix. Let us further define a vector $\boldsymbol \phi$ s.t. $\phi_j=1$ if $j$-th column of $\mathbf {\tilde M}$ is a column of the identity matrix.

Similarly, the generator of the linear space corresponding to the partial derivative $\frac{\partial}{\partial x_i}$ is a matrix $\mathbf {\tilde M}_i$ obtained by removing all columns of $\mathbf {\tilde M}$ but the ones that correspond to monomials that include $x_i=1$. Its dimension is equal to $\rank \mathbf {\tilde M}_i$. From the construction it follows that $\rank \mathbf {\tilde M}_i\ge |\{j| \phi_j=1 \wedge v^{(j)}_i=1\}|$. In case of monomial codes we have $\rank \mathbf {\tilde M}_i=|\{j| \phi_j=1 \wedge v^{(j)}_i=1\}|$, so it only remains to see that the bound \eqref{eq:flb} minimizes $\max_i |\{j| \phi_j=1 \wedge v^{(j)}_i=1\}|$. The extension to the general case is straightforward.

\begin{ex}
	Consider $m=4,r=2,k=4$ and assume $\boldsymbol \phi=(0,1,1,1,1,0)$, where the ordering on degree-2 monomials is $(x_1x_2,x_1x_3,x_1x_4,x_2x_3,x_2x_4,x_3x_4)$. This vector corresponds to the matrix
	$$
	\mathbf {\tilde M}=\begin{pmatrix}
		*&1&0&0&0&*\\
		*&0&1&0&0&*\\
		*&0&0&1&0&*\\
		*&0&0&0&1&*\\
	\end{pmatrix},	
	$$
	where $*$ can be any binary value. The partial derivatives correspond to the matrices 
	$$
	\mathbf {\tilde M}_1=
	\begin{pmatrix}
		*&1&0\\
		*&0&1\\
		*&0&0\\
		*&0&0\\
	\end{pmatrix}	
	\mathbf {\tilde M}_2=\begin{pmatrix}
		*&0&0\\
		*&0&0\\
		*&1&0\\
		*&0&1\\
	\end{pmatrix}\\,
	$$	
	$$
	\mathbf {\tilde M}_3=\begin{pmatrix}
	1&0&*\\
	0&0&*\\
	0&1&*\\
	0&0&*\\
	\end{pmatrix}	
	\mathbf {\tilde M}_4=\begin{pmatrix}
		0&0&*\\
		1&0&*\\
		0&0&*\\
		0&1&*\\
	\end{pmatrix}.
	$$	
	We have $\max_i \rank \mathbf {\tilde M}_i\ge2$ and we know that the code spanned by $\{x_1x_3,x_1x_4,x_2x_3,x_2x_4\}$ has $\max_i \rank \mathbf {\tilde M}_i=2$.
\end{ex}

\subsection{Proof of Proposition \ref{prop:fconv}}
\label{ss:fAs}
Let $m$ be an odd number and consider an optimal fully symmetric code of rate $1/2$. Its dimension can be expressed as $k=2^{m-1}=\sum_{i=0}^{\lfloor m/2\rfloor}\binom{m}{i}$ and its derivatives have dimension
$$
\tilde k=\sum_{i=0}^{\lfloor m/2\rfloor-1}\binom{m-1}{i}.
$$
Now consider $\left|\frac{\tilde k}{2^{m-1}}-\frac12\right|=\left|\frac{\binom{m-1}{\lfloor m/2\rfloor}}{2^m}\right|$, which goes to 0 with $m\to \infty$, and to finish the proof it remains to notice that the bound \eqref{eq:flb} is convex, which gives the same convergence for all values of $k$. Assume now that $\frac{\partial C}{\partial x_i}$ is a subcode of $\mathcal C$, i.e., $\aut(\mathcal C)$ contains the permutation $\mathbf x\to \mathbf x+\mathbf e_i$. Since any permutation $\mathbf x\to \mathbf x+\mathbf b$ can be decomposed into a product of transpositions, we have $\dim \frac{\partial C}{\partial x_i}\le k/2$ due to Legeay \cite[Corollary 1]{legeay2011perm} and therefore the lower bound on the derivative code rate converges to the worst-case value.

\subsection{Symmetry of RM and eBCH codes}
\label{ss:affsym}

\begin{prop}
	Reed-Muller codes are fully symmetric.
\end{prop}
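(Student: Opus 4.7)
The plan is to show directly that for the Reed-Muller code RM$(r,m)$, the dimension of the derivative code $\mathcal C^{(\mathbf e_i)}$ does not depend on the choice of coordinate $i$, because each such derivative is isomorphic to RM$(r-1,m-1)$.

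First I would unwind the definition of the partial derivative on monomials. Since $\frac{\partial x^{\mathbf v}}{\partial x_i}$ is equal to $x^{\mathbf v-\mathbf e_i}$ when $v_i=1$ and is zero otherwise, applying \eqref{eq:proj} to $M_{r,m}=\{x^{\mathbf v}\mid \wt(\mathbf v)\le r\}$ gives the generating set
\[
M_{\text{RM}(r,m)\to \mathbf e_i}=\{x^{\mathbf w}\mid w_i=0,\ \wt(\mathbf w)\le r-1\}.
\]
After discarding the redundant coordinates $\mathbf x+\mathbf e_i$ as in the discussion preceding the proposition, this is exactly the evaluation code on $\mathbb F_2^{m-1}$ (the hyperplane $x_i=0$) spanned by all monomials in the remaining $m-1$ variables of degree at most $r-1$, i.e.\ RM$(r-1,m-1)$.

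Next I would record the dimension, which is $\sum_{j=0}^{r-1}\binom{m-1}{j}$, a quantity manifestly independent of $i$. Hence every partial derivative $\mathcal C^{(\mathbf e_i)}$ has the same dimension, satisfying the definition of full symmetry.

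There is essentially no obstacle: the only thing to be careful about is the identification of the derivative as a code on half as many coordinates (handled by the general discussion following \eqref{eq:proj}) and the observation that the monomials $\{x^{\mathbf w}\mid w_i=0,\ \wt(\mathbf w)\le r-1\}$ really are linearly independent so that no dimensional collapse occurs, which is immediate since distinct monomials have distinct evaluation vectors over $\mathbb F_2^{m-1}$. One could alternatively invoke the fact that $\aut(\text{RM}(r,m))$ contains the symmetric group permuting the variables $x_0,\dots,x_{m-1}$, which transports $\mathcal C^{(\mathbf e_i)}$ to $\mathcal C^{(\mathbf e_j)}$ and therefore forces equal dimensions, but the direct computation above is cleaner and gives the stronger information that every partial derivative is itself a Reed-Muller code.
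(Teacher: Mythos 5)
Your proof is correct and follows essentially the same route as the paper: it computes the partial derivative of each monomial in $M_{r,m}$ and identifies the derivative code as RM$(r-1,m-1)$, whose dimension is independent of $i$. Your version simply spells out the details (and notes the automorphism-group alternative) that the paper's one-line argument leaves implicit.
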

\begin{proof}
	Indeed, by construction $M_{r,m}$ includes all $m$-variate monomials up to degree $r$ and the generating set of any partial derivative consists of all $(m-1)$-variate monomials up to degree $r-1$.	
\end{proof}

\begin{lm}
\label{lm:permeq}
If $\aut(\mathcal C)$ contains permutation $\pi$ s.t. $\pi(\mathbf x+\mathbf {\tilde b})=\pi(\mathbf x)+\mathbf b$ for some nonzero $\mathbf b,\mathbf {\tilde b}\in \mathbb F_2^m$, then the codes induced by the direction derivatives $D_{\mathbf b}$ and $D_{\mathbf {\tilde b}}$ are permutation equivalent.
\end{lm}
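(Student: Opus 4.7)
The plan is to translate the algebraic hypothesis $\pi(\mathbf x+\tilde{\mathbf b})=\pi(\mathbf x)+\mathbf b$ into a functional identity between the derivative operators $D_{\mathbf b}$ and $D_{\tilde{\mathbf b}}$, then descend from length $2^m$ to length $2^{m-1}$ by observing that $\pi$ sends $\tilde{\mathbf b}$-orbits to $\mathbf b$-orbits. I would proceed in four steps.

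First, establish the key functional identity $D_{\tilde{\mathbf b}}(\pi(f))=\pi(D_{\mathbf b}f)$ for every Boolean function $f$. This is a one-line calculation: by definition $\pi(f)(\mathbf x)=f(\pi(\mathbf x))$, so
\[
D_{\tilde{\mathbf b}}(\pi(f))(\mathbf x)=f(\pi(\mathbf x+\tilde{\mathbf b}))-f(\pi(\mathbf x))=f(\pi(\mathbf x)+\mathbf b)-f(\pi(\mathbf x))=\pi(D_{\mathbf b}f)(\mathbf x),
\]
where the middle step uses the hypothesis.

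Second, lift this to codes. Since $\pi\in\aut(\mathcal C)$, the map $f\mapsto\pi(f)$ is a bijection of $\mathcal C$ onto itself, so letting $f$ range over (functions whose evaluation is a codeword of) $\mathcal C$, the previous identity yields an equality of length-$2^m$ codes
\[
\{\ev(D_{\tilde{\mathbf b}}f):f\in\mathcal C\}=\pi\bigl(\{\ev(D_{\mathbf b}f):f\in\mathcal C\}\bigr),
\]
i.e. the unshortened derivative codes satisfy $\mathcal C^{(\tilde{\mathbf b})}=\pi(\mathcal C^{(\mathbf b)})$.

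Third, take care of the shortening from $2^m$ down to $2^{m-1}$. The hypothesis immediately implies $\pi(\{\mathbf x,\mathbf x+\tilde{\mathbf b}\})=\{\pi(\mathbf x),\pi(\mathbf x)+\mathbf b\}$, so $\pi$ induces a bijection between the orbits of the translation $\mathbf x\mapsto\mathbf x+\tilde{\mathbf b}$ and those of $\mathbf x\mapsto\mathbf x+\mathbf b$. If $R_{\tilde{\mathbf b}}\subset\mathbb F_2^m$ is any set of representatives of the $\tilde{\mathbf b}$-orbits, then $\pi(R_{\tilde{\mathbf b}})$ is a set of representatives of the $\mathbf b$-orbits, and $\pi|_{R_{\tilde{\mathbf b}}}$ is a bijection between these coordinate sets. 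Shortening the equality $\mathcal C^{(\tilde{\mathbf b})}=\pi(\mathcal C^{(\mathbf b)})$ to $R_{\tilde{\mathbf b}}$ on the left and to $\pi(R_{\tilde{\mathbf b}})$ on the right therefore produces the same length-$2^{m-1}$ code up to the coordinate bijection $\pi|_{R_{\tilde{\mathbf b}}}$, proving the asserted permutation equivalence.

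The only real content is the functional identity in step one; the rest is bookkeeping. The main thing to be careful about is the distinction between the function action of $\pi$ (which was defined in the excerpt as $\pi(f)=f\circ\pi$) and the induced action on coordinates when passing to the shortened code, which is the step where one must verify that the choice of orbit representatives does not affect the conclusion.
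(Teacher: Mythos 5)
Your proof is correct and follows essentially the same route as the paper's: the heart of both arguments is the identity $(D_{\tilde{\mathbf b}}(f\circ\pi))(\mathbf x)=(D_{\mathbf b}f)(\pi(\mathbf x))$ obtained by substituting the hypothesis, combined with the fact that $\pi\in\aut(\mathcal C)$ makes $f\mapsto f\circ\pi$ a bijection of the code onto itself. Your third step, checking that $\pi$ carries $\tilde{\mathbf b}$-orbits to $\mathbf b$-orbits so the equivalence survives the discarding of the redundant coordinates, is a welcome bit of extra care that the paper's proof leaves implicit.
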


%\begin{lm}
%	\label{lm:permeq}
%	If $\aut(\mathcal C)$ contains permutation $x\to ax$, where $a\in \mathbb F_{2^m}$, then for any pair $(b, \tilde b)$ such that $a=b\tilde b^{-1}$ codes induced by direction derivatives $D_b$ and $D_{\tilde b}$ are permutation equivalent.
%\end{lm}
\begin{proof}
	Consider some function $f(\mathbf x)$ which is a codeword of $\mathcal C$ and its permutation $g(\mathbf x)=f(\pi(\mathbf x))$. Take the derivatives in directions $\mathbf b$ and $\mathbf {\tilde b}$:
	\begin{align*}
		(D_{\mathbf b}f)(\mathbf x)&=f(\mathbf x)+f(\mathbf x+\mathbf b)\\
		(D_{\mathbf {\tilde b}}g)(\mathbf x)&=g(\mathbf x)+g(\mathbf x+\mathbf {\tilde b})\\&=f(\pi(\mathbf x))+f(\pi(\mathbf x+\mathbf {\tilde b}))
		\\&=f(\pi(\mathbf x))+f(\pi(\mathbf x)+\mathbf b)
	\end{align*}
	It follows that $D_{\mathbf {\tilde b}}g$ can be obtained from $D_{\mathbf b}f$ by map $\mathbf x\to \pi(\mathbf x)$. Since both $f$ and $g$ are codewords of $\mathcal C$, we can conclude that any codeword of $D_{\mathbf {\tilde b}} \mathcal C$ can be obtained from a codeword of $D_{\mathbf b} \mathcal C$ by permutation and vice versa and consequently the derivatives of $\mathcal C$ in directions $\mathbf b$ and $\mathbf {\tilde b}$ lead to permutation equivalent codes.
\end{proof}
\begin{prop}
	\label{prop:permeq}
	Affine-invariant codes are fully symmetric.
\end{prop}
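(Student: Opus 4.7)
My plan is to invoke Lemma \ref{lm:permeq} directly: to show that all partial derivatives $D_{\mathbf e_i}\mathcal C$ have the same dimension, it suffices to exhibit, for every pair of indices $i,j \in [m]$, a permutation $\pi \in \aut(\mathcal C)$ satisfying $\pi(\mathbf x+\mathbf e_i)=\pi(\mathbf x)+\mathbf e_j$. If such a $\pi$ exists, the lemma gives a permutation equivalence between $D_{\mathbf e_i}\mathcal C$ and $D_{\mathbf e_j}\mathcal C$, which forces equality of their dimensions and hence full symmetry in the sense of the definition.

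First I would recall that an affine-invariant code is by definition stabilized by the full affine group $AGL_m(\mathbb F_2)$ acting on the coordinates of $\mathbb F_2^m$, i.e., $\aut(\mathcal C)$ contains every map of the form $\mathbf x \mapsto \mathbf A\mathbf x + \mathbf c$ with $\mathbf A \in GL_m(\mathbb F_2)$. Then, given $i,j \in [m]$, I would pick any $\mathbf A \in GL_m(\mathbb F_2)$ with $\mathbf A\mathbf e_i = \mathbf e_j$ (such an $\mathbf A$ always exists, e.g., extend $\mathbf e_j$ to a basis and send $\mathbf e_i$ to $\mathbf e_j$ while permuting the remaining basis vectors) and take $\pi(\mathbf x)=\mathbf A\mathbf x$. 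A one-line check shows $\pi(\mathbf x+\mathbf e_i)=\mathbf A\mathbf x+\mathbf A\mathbf e_i=\pi(\mathbf x)+\mathbf e_j$, so Lemma \ref{lm:permeq} applies with $\mathbf b=\mathbf e_j$ and $\mathbf{\tilde b}=\mathbf e_i$.

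Since the linear group acts transitively on nonzero vectors, the same argument actually shows that derivatives in any two nonzero directions $\mathbf b, \mathbf{\tilde b}$ lead to permutation equivalent codes, which is stronger than what the definition of \textit{fully symmetric} demands, but still follows from the same construction. I would mention this as a remark, since it implies that in the proof of Proposition \ref{prop:flb} we can without loss of generality work with an arbitrary direction derivative instead of only the partial ones.

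The argument is essentially a one-liner once Lemma \ref{lm:permeq} is in place, so I do not anticipate a genuine obstacle. The only subtlety is making sure the hypothesis of the lemma is formulated in the direction that matches the definition of affine invariance (linear action on coordinates rather than on monomials), and, if affine invariance in this paper is stated only for the extended affine group acting on $\mathbb F_{2^m}$ rather than on $\mathbb F_2^m$, briefly reconciling the two viewpoints via the standard $\mathbb F_2$-linear identification of the two spaces.
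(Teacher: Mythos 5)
Your approach is the same as the paper's: invoke Lemma \ref{lm:permeq} with a linear automorphism that carries one derivative direction to another, and conclude that all partial derivatives are permutation equivalent and hence equidimensional. The one substantive issue is your definition of affine-invariance. The paper (following the classical usage for extended cyclic codes) defines an affine-invariant code as one whose automorphism group contains the maps $x\to ax+b$ with $a\in\mathbb F_{2^m}^*$, $b\in\mathbb F_{2^m}$, i.e.\ the group $AGL_1(\mathbb F_{2^m})$ acting by \emph{field} multiplication. You instead assume invariance under the full group $AGL_m(\mathbb F_2)$ of maps $\mathbf x\mapsto \mathbf A\mathbf x+\mathbf c$ with $\mathbf A\in GL_m(\mathbb F_2)$. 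That is a strictly stronger hypothesis: $AGL_1(\mathbb F_{2^m})$ is a proper subgroup of $AGL_m(\mathbb F_2)$, and eBCH codes are in general invariant only under the former. So, taken literally, your argument proves a statement that would not cover the paper's Corollary on eBCH codes, which is the whole point of the proposition. The hedge at the end of your write-up ("reconciling the two viewpoints via the standard identification") does not by itself close this, because the issue is not the identification of $\mathbb F_{2^m}$ with $\mathbb F_2^m$ but the size of the group you are allowed to use.

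Fortunately the repair is immediate and is exactly what the paper does: the multiplicative group $\mathbb F_{2^m}^*$ already acts transitively on nonzero vectors, so for any nonzero directions $\mathbf b,\tilde{\mathbf b}$ you may take $a=\mathbf b\,\tilde{\mathbf b}^{-1}$ (field division) and $\pi(x)=ax$; then $\pi(x+\tilde{\mathbf b})=\pi(x)+\mathbf b$ and Lemma \ref{lm:permeq} applies. Your closing remark that all directional derivatives (not just the partial ones) are permutation equivalent survives this correction unchanged, since transitivity on nonzero vectors is all that is used.
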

\begin{proof}
A code is affine-invariant if $\aut(\mathcal C)$ contains permutations $x\to ax+b$ for $a\in \mathbb F_{2^m}^*,b\in  \mathbb F_{2^m}$, where the multiplication $ax$ is performed in the finite field $\mathbb F_{2^m}$. A permutation $x\to ax$ satisfies the conditions of Lemma \ref{lm:permeq} for all pairs $(b, \tilde b)$ s.t. $b\tilde b^{-1}=a$ and it follows that all derivative codes are permutation equivalent.
\end{proof}
\begin{cor}
	eBCH codes are fully symmetric.
\end{cor}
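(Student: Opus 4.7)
The plan is to derive the corollary as a direct application of Proposition \ref{prop:permeq} once affine invariance of the eBCH code is in hand. First I would identify the $n=2^m$ coordinates of an extended primitive narrow-sense BCH code with the elements of $\mathbb{F}_{2^m}$, with the overall parity-check symbol playing the role of the zero element (the non-extended BCH code lives naturally on $\mathbb{F}_{2^m}^*$). Under this identification, the affine group $\{x\to ax+b : a \in \mathbb{F}_{2^m}^*,\, b \in \mathbb{F}_{2^m}\}$ acts on the coordinates in precisely the manner required by Proposition \ref{prop:permeq}.

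Next I would invoke the classical fact (due to Kasami, Lin, and Peterson, see e.g.\ \cite{sloane1978ecc}) that this affine group is contained in the automorphism group of the eBCH code. The standard combinatorial argument shows that the defining set of the extended primitive narrow-sense BCH code is closed under the $2$-adic partial order on exponents: a word $(c_\alpha)_{\alpha \in \mathbb{F}_{2^m}}$ is a codeword iff $\sum_\alpha c_\alpha \alpha^j = 0$ for every $j$ in the defining set, and substituting $\alpha \to a\alpha + b$ and expanding $(a\alpha+b)^j$ by the binomial theorem rewrites each constraint as a linear combination of constraints of the same form, all of which vanish. I would either cite this theorem or sketch this binomial-expansion calculation, rather than reprove it from scratch.

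With affine invariance established, Proposition \ref{prop:permeq} applies verbatim and concludes the proof. The only item that needs care is the coordinate identification of the extended parity-check position with the zero element of $\mathbb{F}_{2^m}$, which is standard but not entirely automatic since the base BCH code is usually described only on $\mathbb{F}_{2^m}^*$; beyond this bookkeeping, I do not anticipate any real obstacle.
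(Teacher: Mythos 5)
Your proposal is correct and follows exactly the paper's own route: cite the classical affine invariance of extended primitive BCH codes (Theorem 16 in \cite{sloane1978ecc}) and apply Proposition \ref{prop:permeq}. The extra detail you give on the coordinate identification and the binomial-expansion argument is fine but not needed beyond the citation.
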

\begin{proof}
Indeed, eBCH codes are affine-invariant \cite[Theorem 16]{sloane1978ecc} and therefore fully symmetric.
\end{proof}

%\begin{figure}
%	\centering
%	\includegraphics[width=\linewidth]{plots/Rates6_bch.eps}
%	\caption{Projected code rates, $n=64$.}
%	\label{fig:rates6}
%\end{figure}

\begin{figure}
	\centering
	\includegraphics[width=\linewidth]{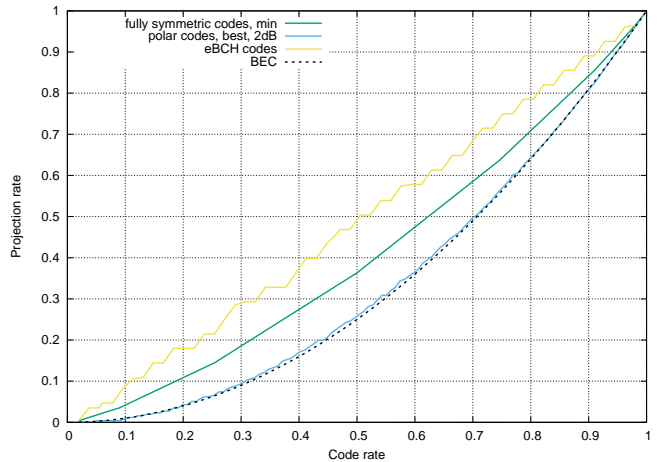}
	\caption{Bounds on the derivative code rates for fully symmetric codes, $n=512$}
	\label{fig:rates9}
\end{figure}

Figure \ref{fig:rates9} shows the actual derivative code rates for eBCH codes of length $512$ along with the lower bound \eqref{eq:flb} on the derivative rates for fully symmetric codes (recall that RM codes achieve this bound) compared to the smallest derivative rate for polar codes constructed using the Gaussian approximation \cite{trifonov2012efficient} for $E_b/N_0=2$dB. The capacity transformation for BEC is given as a reference (if $I(W)=(1-\varepsilon)$, $I(W^{(-)})=(1-\varepsilon)^2$). Observe that the bound is rather loose for eBCH codes, which are close to an upper bound $k/2$. Potentially a better bound might be derived by taking more structural properties into consideration rather than only full symmetry, which we leave as a direction for the future research. However, this plot provides a good demonstration why the list size for near-ML decoding of eBCH codes grows even faster than for RM codes (and why for both codes it quickly becomes impractical). An interesting observation is that the smallest derivative rate for polar codes is close to the capacity of the 'XOR' erasure channel despite being constructed for the Gaussian channel.

\section{Partially symmetric codes}

We demonstrated that the full symmetry puts a rather restrictive lower bound on the dimensions of the derivatives. In this section, we show what happens if we demand fewer symmetries and derive the equivalent bounds.

\begin{defn}
A $(2^m,k)$ code $\mathcal C$ is $t$-symmetric if $t$ of its partial derivatives have equal dimensions, which we denote as $\tilde k_t$, and $m-t$ have dimensions strictly greater.
\end{defn}
In other words, there exists a set of \textit{target derivatives} $\mathcal H_t,|\mathcal H_t|=t$, such that $\forall \mathbf e_i\in \mathcal H_t\ \dim \mathcal C^{(\mathbf e_i)}=\tilde k_{t}$ and $\forall \mathbf e_i\notin \mathcal H_t\ \dim \mathcal C^{(\mathbf e_i)}>\tilde k_{t}$. A code is fully symmetric if $t=m$, non-symmetric if $t=1$ and partially symmetric otherwise. Reed-Muller codes are fully symmetric and polar codes are in general non-symmetric. Without loss of generality, we assume $\mathcal H_t=\{\mathbf e_i|i\in [t]\}$.

\begin{prop}
	\label{prop:lb}
	If a $t$-symmetric code $\mathcal C$ has dimension $k=\sum_{i=0}^{l-1}\binom{t}{i}2^{m-t}+j\frac{\lcm(l,t)}{l}$, then 
	\begin{equation}
		\label{eq:lb}
		\tilde k_t\ge \sum_{i=0}^{l-2}\binom{t-1}{i}2^{m-t}+j\frac{\lcm(l,t)}{t}.
	\end{equation} 
\end{prop}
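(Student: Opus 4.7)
The approach mirrors the proof of Proposition~\ref{prop:flb} in Section~\ref{ss:flb} with the full Hamming weight $\wt(\mathbf v)$ replaced by the weight $\wt(\mathbf v|_{[t]})$ restricted to the target coordinates. The polynomial case will reduce to the monomial case verbatim as in Section~\ref{ssFullPoly}: the Gaussian-elimination argument yields an indicator vector $\boldsymbol\phi$ with $\dim\mathcal C^{(\mathbf e_i)}\ge|\{j:\phi_j=1\wedge v^{(j)}_i=1\}|$, so any monomial lower bound transfers. I therefore focus on the monomial case.

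View the $(2^m,k)$ monomial code $\mathcal C$ as the rate-1 code with a set $R$ of $2^m-k$ monomials removed. Removing $x^{\mathbf v}$ decreases $\dim\mathcal C^{(\mathbf e_i)}$ by $v_i$, and by $t$-symmetry each $r_i=2^{m-1}-\tilde k_t$ is constant across $i\in[t]$, giving
\[
t\bigl(2^{m-1}-\tilde k_t\bigr)=\sum_{\mathbf v\in R}\wt(\mathbf v|_{[t]}).
\]
Thus a lower bound on $\tilde k_t$ follows from an upper bound on $\sum_{\mathbf v\in R}\wt(\mathbf v|_{[t]})$ over arbitrary subsets $R$ of size $2^m-k$.

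The upper bound is obtained by the greedy observation that $\sum_{\mathbf v\in R}\wt(\mathbf v|_{[t]})$ is maximized by removing monomials in decreasing order of $\wt(\mathbf v|_{[t]})$. Partitioning the $2^m$ monomials into layers $L_s=\{\mathbf v:\wt(\mathbf v|_{[t]})=s\}$ of size $\binom{t}{s}2^{m-t}$, each $i\in[t]$ appears in exactly $\binom{t-1}{s-1}2^{m-t}$ elements of $L_s$, so removing a whole layer preserves target balance automatically. Removing $L_t,L_{t-1},\ldots,L_{l+1}$ entirely together with any $\binom{t}{l}2^{m-t}-p$ elements of $L_l$ (where $p=j\lcm(l,t)/l$) gives exactly $|R|=2^m-k$ and yields the greedy maximum, from which \eqref{eq:lb} follows by direct substitution using $s\binom{t}{s}=t\binom{t-1}{s-1}$. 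The divisibility of $p$ by $\lcm(l,t)/l$ is precisely what ensures that the greedy removal can be chosen so that $r_i$ is constant across $i\in[t]$, i.e., that this greedy optimum is actually attainable by a $t$-symmetric code; existence of such a balanced subset of $L_l$ reduces to the same $(\cdot,l)$-biregular-subgraph / max-flow construction as in Figure~\ref{fig:subgraph}, now applied to $t$ target variables and the $\binom{t}{l}$ weight-$l$ restricted supports, each inflated by $2^{m-t}$ tail copies.

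The main obstacle I anticipate is not the combinatorics itself, which is a clean scaling of the $t=m$ case of Proposition~\ref{prop:flb}, but the careful bookkeeping of the tail multiplicity $2^{m-t}$ throughout the layer counts and the verification that the max-flow construction still admits an integer solution once the right-vertex multiplicity $2^{m-t}$ is introduced. Once these are in place, the polynomial lift is unchanged from Section~\ref{ssFullPoly}.
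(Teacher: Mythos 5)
Your proposal is correct and follows essentially the same route as the paper: define the restricted weight $\tau_{\mathbf v}=\wt(\mathbf v|_{[t]})$, greedily remove the $2^m-k$ monomials of largest restricted weight from the rate-$1$ code using the layer sizes $\binom{t}{s}2^{m-t}$, obtain the $\lcm$ term from the requirement that the per-derivative decrease $pl/t$ be an integer, realize the balanced removal via the same biregular-subgraph/max-flow construction, and lift to polynomial codes via the Gaussian-elimination argument of Section~\ref{ssFullPoly}. Your writeup is simply a more explicit version of the paper's proof, in particular the double-counting identity $t\bigl(2^{m-1}-\tilde k_t\bigr)=\sum_{\mathbf v\in R}\tau_{\mathbf v}$ and the use of $s\binom{t}{s}=t\binom{t-1}{s-1}$, which the paper leaves implicit by referring back to Section~\ref{ss:flb}.
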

\noindent We get $\lcm$ in the expression due to the $t$-symmetry constraint, which is further explained below. Again we call a partially symmetric code \textit{optimal} if it satisfies \eqref{eq:lb} with an equality.
\begin{prop}
\label{prop:conv}
Consider a sequence of optimal $t$-symmetric codes $\mathcal C$ of fixed rate and increasing length $2^m$, where $t$ is an increasing function of $m$. Then for $i\in [t]$
\begin{equation}
	\label{eq:conv}
	\lim_{m\to \infty} R(\mathcal C^{(\mathbf e_i)})=R(\mathcal C).
\end{equation} 
\end{prop}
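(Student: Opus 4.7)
The plan is to follow the same blueprint used for Proposition \ref{prop:fconv} in Section \ref{ss:fAs}: first verify convergence at an anchor rate where the bound \eqref{eq:lb} collapses to a clean closed form, then extend to arbitrary rates by invoking the monotonicity and convexity of that bound in $k$.

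First I would take $R=1/2$ with $t$ odd. Then $k=2^{m-1}$ matches the form of Proposition \ref{prop:lb} with $l=(t+1)/2$ and $j=0$, so the optimal derivative dimension is
\[
\tilde k_t=\sum_{i=0}^{(t-3)/2}\binom{t-1}{i}\,2^{m-t}=\left(2^{t-2}-\frac{1}{2}\binom{t-1}{(t-1)/2}\right)2^{m-t}
\]
by symmetry of Pascal's triangle. Dividing by $2^{m-1}$ yields $R(\mathcal C^{(\mathbf e_i)})=\frac{1}{2}-\binom{t-1}{(t-1)/2}/2^t$, which differs from $R$ by $\Theta(1/\sqrt{t})$ via Stirling. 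Since $t\to\infty$ with $m$, this vanishes.

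For arbitrary $R\in(0,1)$ I would apply Pascal's identity $\binom{t}{i}=\binom{t-1}{i}+\binom{t-1}{i-1}$ to telescope the expressions for $k$ and $\tilde k_t$ into
\[
2\tilde k_t-k=-\binom{t-1}{l-1}\,2^{m-t}+j\,\lcm(l,t)\!\left(\frac{2}{t}-\frac{1}{l}\right).
\]
Dividing by $2^m$ and using the range $0\le j\,\lcm(l,t)/l<\binom{t}{l}$ implicit in Proposition \ref{prop:lb}, I obtain
\[
\bigl|R(\mathcal C^{(\mathbf e_i)})-R(\mathcal C)\bigr|\le \frac{\binom{t-1}{l-1}}{2^t}+\frac{\binom{t}{l}}{2^{m-1}}.
\]
Fixing a nontrivial rate forces $l-t/2=O(\sqrt{t})$, because the cumulative binomial with parameter $1/2$ tends to a step function at the mean; the central-row estimate $\binom{t}{\lfloor t/2\rfloor}=\Theta(2^t/\sqrt{t})$ then makes the first term $O(1/\sqrt{t})$, and the second is at most $\binom{t}{l}/2^t\cdot 2^{t-m+1}=O(1/\sqrt{t})$ since $t\le m$. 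Both vanish as $t\to\infty$. Dimensions $k$ that do not fit an integer $j$ of the stated form are then handled by the convexity and monotonicity of \eqref{eq:lb} in $k$, exactly as in Section \ref{ss:fAs}.

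The step I expect to require the most care is the uniform control of the $j$-residual: the $\lcm$ factor in \eqref{eq:lb} makes the arithmetic slightly delicate, and one must verify that for every fixed rate the optimum $l$ stays within the CLT scaling window around $t/2$ so that the relevant binomial coefficients remain in the central row. Once this observation is in place, the rest is a routine manipulation of binomial sums combined with Stirling's formula.
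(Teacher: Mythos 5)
Your proof follows the paper's argument essentially verbatim: anchor at rate $1/2$ with $t$ odd, where the defect $R(\mathcal C)-R(\mathcal C^{(\mathbf e_i)})=\binom{t-1}{(t-1)/2}/2^t=\Theta(1/\sqrt{t})$ vanishes because $t$ grows with $m$, then extend to other rates. The only difference is that where the paper dismisses general $k$ with a one-line appeal to convexity of the bound \eqref{eq:lb}, you carry out the Pascal-identity telescoping explicitly and track the $j$-residual, which is a more quantitative (and arguably more convincing) version of the same step.
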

Therefore, even in more relaxed setting we get the lower bound similar to the one for fully symmetric code and identical asymptotic result. 

\subsection{Proof of Proposition \ref{prop:lb}}
Define $\tau_{\mathbf v}=|\{i\in [t]|v_i=1\}|$, i.e. the number of variables $\{x_0,\dots,x_{t-1}\}$ in the monomial $x^{\mathbf v}$. We start from rate-1 code $\mathcal C_m$ and using the same argument as in section \ref{ss:flb} we conclude that removing $2^m-k$ monomials of the largest $\tau_{\mathbf v}$ gives the optimal $t$-symmetric code. The number of monomials $x^{\mathbf v}$ s.t. $\tau_{\mathbf v}=l$ is $\binom{t}{l}2^{m-t}$ since $l$ of the variables $\{x_0,\dots,x_{t-1}\}$ can be selected in $\binom{t}{l}$ ways with any combination of the remaining $m-t$.

Figure \ref{fig:rates9tsym} demonstrates the lower bound \ref{eq:lb} on the derivative rates of partially symmetric codes for $t>2$ and $n=512$. In case of 3-symmetric codes, the bound is close to the BEC curve, which is similar to the best derivative for polar codes, so we can expect rather good list decoding performance. However, it quickly grows with $t$, so we expect the large list size for near-ML decoding except for the low- and high-rate regions.

\begin{figure}
	\centering
	\includegraphics[width=\linewidth]{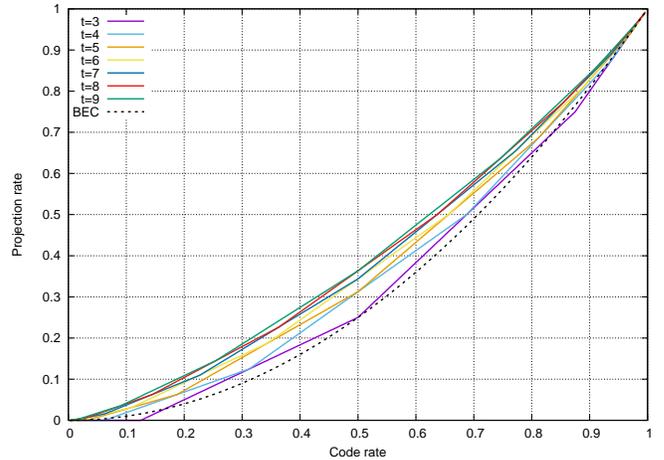}
	\caption{Lower bounds on the derivative code rates for partially symmetric codes, $n=512$}
	\label{fig:rates9tsym}
\end{figure}

%We can use the procedure for obtaining the lower bound in order to establish the upper bound on the minimum distance of codes achieving $\tilde k^*_{m,t}$. For every $l$ we simply sort the monomials with $l$ variables in $\mathcal M_t$ by the total degree and perform the removal procedure from the highest to the lowest.
%\begin{prop}
%	\label{prop:dmin}
%	Consider a code $\mathcal C_{m,t}$ which achieves the lower bound and assume that the monomial removal procedure stopped at some $l=\hat l$. Then the minimum distance of $\mathcal C_{m,t}$ is at most $2^{m-z}$, where $z=\hat l+m-t$ if less than $\binom{t}{\hat l}$ first entries at stage $l=\hat l$ were removed and $z=\hat l-1+m-t$ otherwise.
%\end{prop}
%This result follows directly from the construction. For any $l$, the maximal total degree of monomials considered in this stage is $l+m-t$ ($l$ variables from $\mathcal M_t$ and all $m-t$ from $\mathcal M_t^c$), and there are $\binom{t}{l}$ such monomials.

Similarly to the section \ref{ssFullPoly}, the bound for monomial codes also holds for polynomial codes.

\subsection{Proof of Proposition \ref{prop:conv}}
Let $t$ be an odd number and consider an optimal $t$-symmetric code of rate $1/2$. Its dimension can be expressed as $k=2^{m-1}=2^{m-t}\sum_{i=0}^{\lfloor t/2\rfloor}\binom{t}{i}$ and its target derivatives have dimension
$$
\tilde k_t=2^{m-t}\sum_{i=0}^{\lfloor t/2\rfloor-1}\binom{t-1}{i}.
$$
If $t$ is an increasing function of $m$, similarly to section \ref{ss:fAs} the expression $\tilde k_t/2^{m-1}$ converges to $\frac12$ and due to convexity of bound \eqref{eq:lb}, the same holds for all values of $k$. Therefore, any sequence of $(2^m,k)$ binary linear codes with the partial symmetry growing with $m$ is asymptotically bad for polar-like decoding.

\subsection{Code construction}
\label{ssDesign}
The construction of the optimal $t$-symmetric monomial codes, i.e., achieving \eqref{eq:lb} with equality, is summarized in Algorithm \ref{alg:optcode}.
\begin{algorithm}
	\caption{Construction of optimal partially symmetric code}
	\label{alg:optcode}
\begin{algorithmic}
	\State Take $M_{\mathcal C}=M_{\mathcal C_m}$, $k'=|M_{\mathcal C_m}|=2^m$
	\State $\hat l\gets t$
	\While{$k'-2^{m-t}\binom{t}{\hat l} \ge k$}
	\State Remove from $M_{\mathcal C}$ all monomials with $\tau_{\mathbf v}=\hat l$
	\State $\hat l\gets \hat l-1$, $k'\gets k'-2^{m-t}\binom{t}{\hat l}$
	\EndWhile
	\State $\hat d\gets m-t+\hat l$	
	\While{$k'-\binom{t}{\hat l}\binom{m-t}{\hat d-\hat l} \ge k$}
	\State Remove from $M_{\mathcal C}$ all degree-$\hat d$ monomials with $\tau_{\mathbf v}=\hat l$
	\State $\hat d\gets \hat d-1$, $k'\gets k'-\binom{t}{\hat l}\binom{m-t}{\hat d-\hat l}$
	\EndWhile
	\While{$k'-\binom{t}{\hat l}\ge k$}
	\State Pick a degree-$(\hat d -\hat l)$ monomial $x^{\mathbf s}$ s.t. $\tau_{\mathbf s}=0$
	\State Remove from $M_{\mathcal C}$ all degree-$\hat d$ monomials with $\tau_{\mathbf v}=\hat l$ that contain $x^{\mathbf s}$
	\State $k'\gets k'-\binom{t}{\hat l}$
	\EndWhile
	\Return $M_{\mathcal C}$
\end{algorithmic}
\end{algorithm}
Note that for the small values of $t$ this construction leads to poor minimum distance. For example, in case of $2$-symmetric codes with dimension $k\ge 2^m-2^{m-2}$ only at most $2^{m-2}$ monomials with $\tau_{\mathbf v}=2$ are removed. Consequently, the generating set contains a degree-$(m-1)$ monomial $\mathbf x^{\mathbf {\hat v}}$ with $\tau_{\mathbf {\hat v}}=1$ and therefore from \eqref{eq:mondist} the code has minimum distance at most 2. 

In practice, one can construct $t$-symmetric codes as subcodes of some Reed-Muller codes RM$(r,m)$ to guarantee that the minimum distance is at least $2^{m-r}$. In this case, at step 1 we start from $M_{r,m}$ instead of $M_{\mathcal C_m}$, at step 2 the term $2^{m-t}$ is replaced with $\binom{t}{\hat l}\sum_{i=0}^{\min(m-t,r-\hat l)} \binom{m-t}{i}$ and at step 3 the initial value of $\hat d$ becomes $\min(m-t+\hat l,r)$ (since after step 1 all monomials with the degree greater than $r$ are already removed and therefore out of consideration).

\begin{ex}
	Consider $m=4,t=3$ and $k=8$. All monomials with nonzero $\tau_{\mathbf v}$ are listed in Table \ref{tab:monrem} sorted in the removal order.
	\begin{table}[]
		\caption{Monomials to remove.}
		\label{tab:monrem}
		\centering
		\begin{tabular}{|c|c|c|}
			\hline
			\textbf{$l$} &\textbf{Impact on dimension} &\textbf{Monomials} \\ \hline
			$3$ & Remove 1 monomial  & $x_1x_2x_3x_4$\\\cline{3-3}
			& $\tilde k$ decreases by 1 & $x_1x_2x_3$    \\ \hline
			$2$   & Remove 3 monomials  & $x_1x_2x_4$, $x_1x_3x_4$, $x_2x_3x_4$   \\ \cline{3-3}
			& $\tilde k$ decreases by 2&  $x_1x_2$, $x_1x_3$, $x_2x_3$\\ \hline
			1 & Remove 3 monomials & $x_1x_4$, $x_2x_4$, $x_3x_4$\\ \cline{3-3}
			& $\tilde k$ decreases by 1 & $x_1$, $x_2$, $x_3$\\ \hline
		\end{tabular}
	\end{table}	
	Start from $M_{4}, k'=16$ and go to step 2. Set $\hat l=3$. $16-2^{4-3}\binom{3}{3}=14\ge 8$, so we remove all monomials that contain $x_1x_2x_3$ ($x_1x_2x_3x_4$ and $x_1x_2x_3$), now $k'=14$ and $\hat l=2$. $14-2^{4-3}\binom{3}{2}=8\ge 8$, so we remove all monomials that contain $x_1x_2$, $x_1x_3$ or $x_2x_3$  ($x_1x_2x_4$, $x_1x_3x_4$, $x_2x_3x_4$ and $x_1x_2$, $x_1x_3$, $x_2x_3$), now $k'=8$ and the construction procedure is terminated since $k'=k$. 
	
	The constructed $(16,8,4)$ code has generating set $M_{\mathcal C_{4,3}}=\{x_1x_4, x_2x_4, x_3x_4,x_1,x_2,x_3,x_4,1\}$ and all of its target derivatives have the generating set $\{x_4,1\}$ of cardinality $2$.
\end{ex}
%For $r<m$, at step 1 we need to take $M_{r,m}$ instead of $M_{m}$, at step 2 term $2^{m-t}\binom{t}{\hat l}$ is replaced with $\binom{t}{\hat l}\sum_{i=0}^{\min(m-t,r-\hat l)} \binom{m-t}{i}$ and at step 3 the initial value of $\hat d$ becomes $\min(m-t+\hat l,r)$ (since after step 1 all monomials with degree greater than $r$ are already removed and therefore out of consideration).
Assume that $k$ satisfies proposition \ref{prop:lb} and Algorithm \ref{alg:optcode} ends with $k'>k$. This means that for some fixed degree-$(\hat d -\hat l)$ monomial $x^{\mathbf s},\tau_{\mathbf s}=0$ we need to remove $k'-k$ degree-$\hat d$ monomials with $\tau_{\mathbf v}=\hat l$ that contain $x^{\mathbf s}$ so that the dimensions of all target derivatives are decreased by $\frac{(k'-k)\hat l}{t}$. The set of monomials to remove can be found using the same bipartite graph formulation as in section \ref{ss:flb}.

\subsection{Performance of partially symmetric monomial codes}
We consider the transmission via additive white Gaussian noise (AWGN) channel with binary phase shift keying (BPSK) modulation. We compare the list and permutation decoding of optimal $(256,127)$ 3-symmetric and $(256,128)$ 5-symmetric monomial codes that are subcodes of RM$(4,8)$, which are constructed with the proposed algorithm. The set of permutations $\pi_i$ is selected as in \cite{kamenev2019polar}, namely by sorting all $m!$ factor graph layer permutations by the SC decoding error probability and picking $P$ smallest such that the Hamming distance between any pair $(\pi_{i'},\pi_{i''})$ is at least 5 so that they are more likely to correct different error patterns. We observed that this method performs better than randomly choosing from $t!$ layer permutations. Maximum likelihood performance is estimated by choosing sufficiently large list size so that the returned codeword is always not farther from the received vector than the true one. For each data point, the simulation is stopped after reaching 1000 decoding error events or 1000000 transmitted frames.
\begin{figure}
	\centering
	\includegraphics[width=\linewidth]{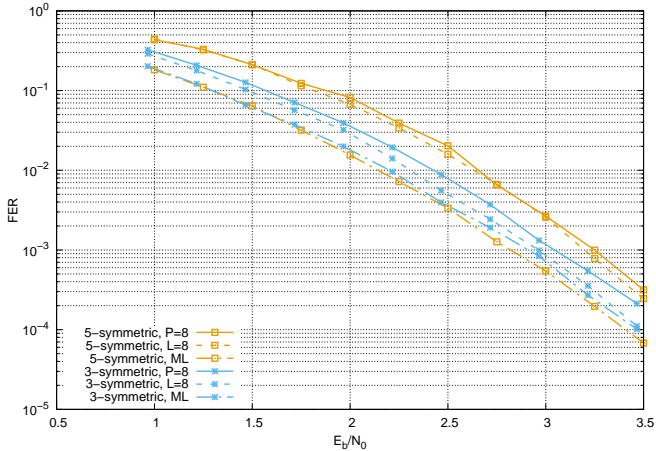}
	\caption{List and permutation decoding performance,  $(256,127)$ 3-symmetric and $(256,128)$ 5-symmetric monomial codes}
	\label{fig:fer8perm}
\end{figure}

The results are presented at figure \ref{fig:fer8perm}. Despite almost identical ML performance, 3-symmetric codes perform better under SCL decoding. In case of 5-symmetric codes, permutation decoding is as efficient as SCL. A similar behavior for partially symmetric codes is also observed in \cite{geiselhart2021autpolar} and \cite{pillet2021polar}, where a larger group of permutations is used for the decoding.
% and polar codes with large automorphism group \cite{geiselhart2021autpolar}. Code with $I_{min}=\{31,57\}$ is optimal 5-symmetric and code with $I_{min}=\{31,99\}$ is 3-symmetric but not optimal (note that the optimal code has dimension $127$). Despite almost identical ML performance, 3-symmetric codes perform better under SCL decoding. On the other hand, in case of 5-symmetric codes permutation decoding is as efficient as SCL and according to \cite{geiselhart2021autpolar}, considering a larger group of permutations can provide an extra gain of order 0.25 dB for these codes.
\section{Discussion}
Our results show that even the partial symmetry often implies an exponential growth of list size. The case of permutation decoding is much trickier to analyze. However, we know that it needs a rather large automorphism group to work efficiently, which in turn increases the partial symmetry and therefore the SC error probability. We conjecture that for larger codes it quickly becomes much less efficient, although the limitations of permutation decoding for short and moderate block lengths are yet to be investigated.

Another question is the construction of partially symmetric polynomial codes. Polar codes with CRC \cite{Tal2015list} and dynamic frozen symbols \cite{trifonov2017rand} demonstrate a significant performance improvement under list decoding despite being rather simple to construct, so we would like to have something similar for the permutation decoding. Experiments show that applying CRC does not give the same effect for the permutation decoding as it does for the list decoding. The design of codes with dynamic frozen symbols is also highly nontrivial. One possible approach might be to start from an optimal monomial code with further refinement of the corresponding linear subspace. However, we did not succeed in constructing the codes of practical interest with this method. 

\section{Conclusion}
In this paper, we studied how permutation group of the code impacts its list decoding performance. One result is that highly symmetric codes such as Reed-Muller or eBCH codes need exponentially growing list size and hence variations of SC algorithm quickly become inefficient for larger code lengths. We also show that the similar result holds in the case of much lesser symmetry.

\section*{Acknowledgements}
The authors thank the anonymous reviewers for many helpful comments, which have greatly improved the quality of the paper.
The authors thank Henry Pfister for the useful feedback and pointing out that Lemma \ref{lm:permeq} can be formulated for more general class of permutations rather than $x\to ax$.

%%%%%%
%% To balance the columns at the last page of the paper use this
%% command:
%%
%\enlargethispage{-1.2cm} 
%%
%% If the balancing should occur in the middle of the references, use
%% the following trigger:
%%
%\IEEEtriggeratref{3}
%%
%% which triggers a \newpage (i.e., new column) just before the given
%% reference number. Note that you need to adapt this if you modify
%% the paper.  The "triggered" command can be changed if desired:
%%
%\IEEEtriggercmd{\enlargethispage{-20cm}}
%%
%%%%%%

\bibliographystyle{IEEEtranTCOM}
\bibliography{template}
\end{document}